\newcommand{\bs}[1]{\boldsymbol #1}
\newtheorem{theorem}{Theorem}
\title{Quantum binary field multiplication with subquadratic\\ Toffoli gate count and low space-time cost}
\author[1,2]{Vivien Vandaele}
\affil[1]{Eviden Quantum Lab, Les Clayes-sous-Bois, France}
\affil[2]{Université de Lorraine, CNRS, Inria, LORIA, F-54000 Nancy, France}
\date{}
\begin{document}
\maketitle

\begin{abstract}
    Multiplication over binary fields is a crucial operation in quantum algorithms designed to solve the discrete logarithm problem for elliptic curve defined over $GF(2^n)$.
    In this paper, we present an algorithm for constructing quantum circuits that perform multiplication over $GF(2^n)$ with $\mathcal{O}(n^{\log_2(3)})$ Toffoli gates.
    We propose a variant of our construction that achieves linear depth by using $\mathcal{O}(n\log(n))$ ancillary qubits.
    This approach provides the best known space-time trade-off for binary field multiplication with a subquadratic number of Toffoli gates.
    Additionally, we demonstrate that for some particular families of primitive polynomials, such as trinomials, the multiplication can be done in logarithmic depth and with $\mathcal{O}(n^{\log_2(3)})$ gates.
\end{abstract}

\section{Introduction}

Binary field arithmetic plays a crucial role in the design of quantum polynomial-time attacks on elliptic curve cryptography over binary fields \cite{Shor, Proos_2003, cheung2008design}.
Consequently, much work has been devoted to analyzing and optimizing the arithmetic operations required for solving the discrete logarithm problem on binary elliptic curves \cite{Kaye_2005, Amento_2013, Budhathoki_2014, Banegas_2020, putranto2022another, Kim_2023, Taguchi_2024}.
Among these operations, multiplication of elements in binary fields stands out as one of the most computationally expensive.
As such, various techniques have been developed to construct efficient quantum circuits performing binary field multiplication.
In Reference~\cite{cheung2008design}, a linear depth circuit composed of $\mathcal{O}(n^2)$ Toffoli gates is presented, based on the classical Mastrovito multiplier \cite{Mastrovito_1989, mastrovito1991vlsi}.
It was demonstrated in Reference~\cite{Rotteler_2014} that the depth of the circuit could be reduced to $\mathcal{O}(\log_2(n))$, at the cost of $\mathcal{O}(n^2)$ ancillary qubits.
Subsequent research has focused on reducing the number of Toffoli gates, as they are particularly costly to implement fault-tolerantly.
Notable progress has been made using approaches inspired by the classical Karatsuba multiplication algorithm~\cite{karatsuba1962multiplication}, which led to a Toffoli gate count of $\mathcal{O}(n^{\log_2(3)})$~\cite{Kepley_2015, van_Hoof_2020, Putranto_2023}.
However, this subquadratic Toffoli gate count comes with trade-offs: either in space, by requiring significantly more qubits, or in time, by substantially increasing the circuit depth.
As a result, despite achieving a subquadratic number of Toffoli gates, these circuits introduce a significant overhead in space-time cost.
Notably, the space-time costs of these circuits, despite having a subquadratic number of Toffoli gates, exceed those of the circuits containing a quadratic number of Toffoli gates proposed in References~\cite{cheung2008design, Rotteler_2014}

In this paper, we propose a novel approach, also similar to the Karatsuba multiplication algorithm, to achieve a Toffoli gate count of $\mathcal{O}(n^{\log_2(3)})$.
Our method offers improved parallelization of the circuit when using ancillary qubits.
In particular, we demonstrate that a linear depth can be achieved with $\mathcal{O}(n\log_2(n))$ ancillary qubits.
This results in a quantum circuit for binary field multiplication that provides the best known space-time trade-off while maintaining the subquadratic number of Toffoli gates.
Moreover, we show that for certain families of primitive polynomials, namely trinomials and equally spaced primitive polynomials, a logarithmic depth can be achieved using $\mathcal{O}(n^{\log_2(3)})$ ancillary qubits.
This leads to an even better space-time cost for these specific cases.
Table~\ref{tab:gf_comparison} summarizes the cost of different methods over various metrics.

In Section~\ref{sec:preliminaries}, we present the necessary background on quantum circuits and binary field multiplication.
Then, in Section~\ref{sec:gf_mult}, we detail our algorithm for binary field multiplication with linear depth and subquadratic Toffoli gate count.
Finally, in Section~\ref{sec:gf_log_depth}, we demonstrate how to achieve logarithmic depth for trinomials and equally spaced primitive polynomials.
Our implementation of the proposed algorithm is publicly available~\cite{github_gf}.

\begin{table}[t]
    \centering
    \begin{tabular}{lcccc} 
        \toprule
        Reference & Toffoli-count & Qubit-count & Depth & Space-time cost \\
        \bottomrule
        \rowcolor{gray!20} \multicolumn{5}{c}{\textbf{Arbitrary primitive polynomials}} \\
        Ref.~\cite{cheung2008design} & $\mathcal{O}(n^2)$ & $\mathcal{O}(n)$ & $\mathcal{O}(n)$ & $\mathcal{O}(n^2)$ \\
        Ref.~\cite{Rotteler_2014} & $\mathcal{O}(n^2)$ & $\mathcal{O}(n^2)$ & $\mathcal{O}(\log_2(n))$ & $\mathcal{O}(n^2\log_2(n))$ \\
        Ref.~\cite{Kepley_2015} & $\mathcal{O}(n^{\log_2(3)})$ & $\mathcal{O}(n^{\log_2(3)})$ & $\mathcal{O}(n)$ & $\mathcal{O}(n^{2.58})$\\
        Ref.~\cite{van_Hoof_2020, Putranto_2023} & $\mathcal{O}(n^{\log_2(3)})$ & $\mathcal{O}(n)$ & $\mathcal{O}(n^{\log_2(3)})$ & $\mathcal{O}(n^{2.58})$ \\
        Section~\ref{sec:gf_mult} & $\mathcal{O}(n^{\log_2(3)})$ & $\mathcal{O}(n\log_2(n))$ & $\mathcal{O}(n)$ & $\mathcal{O}(n^{2}\log_2(n))$\\
        \bottomrule
        \rowcolor{gray!20} \multicolumn{5}{c}{\textbf{Trinomials and equally spaced primitive polynomials}} \\
        \addlinespace[0.3em]
        Section~\ref{sec:gf_log_depth} & $\mathcal{O}(n^{\log_2(3)})$ & $\mathcal{O}(n^{\log_2(3)})$ & $\mathcal{O}(\log_2(n))$ & $\mathcal{O}(n^{1.58}\log_2(n))$\\
        \bottomrule
    \end{tabular}
    \caption{Toffoli-count, qubit-count, depth and space-time cost comparison between different quantum circuits performing multiplication over $GF(2^n)$.}\label{tab:gf_comparison}
\end{table}

\section{Preliminaries}\label{sec:preliminaries}
\subsection{Quantum gates}\label{sub:quantum_gates}

In this work, we will mostly make use of four quantum gates: the CNOT gate, the Toffoli gate, the CCZ gate and the Hadamard gate.
The representations of these gates in the quantum circuit model are illustrated in Figure~\ref{fig:quantum_gates}.

The Controlled-NOT (CNOT) gate is a two-qubit gate that performs a conditional NOT operation.
It applies the following transformation:
\begin{equation}
    \ket{a}\ket{b} \mapsto \ket{a}\ket{a \oplus b},
\end{equation}
where $a$ is the control qubit, $b$ is the target qubit, and $\oplus$ denotes the exclusive OR (XOR) operation.
If the control qubit $a$ is in the $\ket{1}$ state, the target qubit $b$ is flipped; otherwise, it remains unchanged.

The Toffoli gate, also known as the Controlled-Controlled-NOT (CCNOT) gate, is a three-qubit gate that extends the concept of the CNOT gate by using an additional control qubit.
When applied to qubits $a$, $b$, and $c$, where $c$ is the target qubit, it performs the following operation:
\begin{equation}
    \ket{a}\ket{b}\ket{c} \mapsto \ket{a}\ket{b}\ket{c \oplus ab}.
\end{equation}
The target qubit $c$ is flipped when both control qubits $a$ and $b$ are in the $\ket{1}$ state.

The Controlled-Controlled-Z (CCZ) gate is a three-qubit gate similar to the Toffoli gate, but it applies a phase shift instead of a bit flip.
It performs the following transformation:
\begin{equation}
    \ket{a}\ket{b}\ket{c} \mapsto (-1)^{abc}\ket{a}\ket{b}\ket{c}.
\end{equation}
A phase of $-1$ is applied when all three qubits are in the $\ket{1}$ state.

The Hadamard gate is a single-qubit gate that performs a basis transformation between the computational basis ($\{\ket{0}, \ket{1}\}$) and the Hadamard basis ($\{\ket{+}, \ket{-}\}$), where:
\begin{equation}
    \ket{+} = \frac{1}{\sqrt{2}}(\ket{0} + \ket{1}), \quad
    \ket{-} = \frac{1}{\sqrt{2}}(\ket{0} - \ket{1}).
\end{equation}
The basis transformation of the Hadamard gate can be used to described the relation between the CCZ gate and the Toffoli gate via the following circuit equality:
\begin{equation}\label{eq:ccz_toffoli}
    \includegraphics[valign=c]{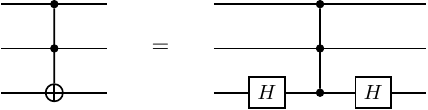}
\end{equation}

\begin{figure}[t]
    \centering
    \begin{subfigure}{0.24\textwidth} \centering
        \includegraphics{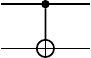}
        \caption{CNOT gate.}
    \end{subfigure}
    \begin{subfigure}{0.24\textwidth} \centering
        \includegraphics{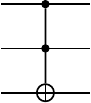}
        \caption{Toffoli gate.}
    \end{subfigure}
    \begin{subfigure}{0.24\textwidth} \centering
        \includegraphics{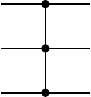}
        \caption{CCZ gate.}
    \end{subfigure}
    \begin{subfigure}{0.24\textwidth} \centering
        \includegraphics{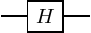}
        \caption{Hadamard gate.}
    \end{subfigure}
    \caption{Representation of quantum gates in the quantum circuit model.}
    \label{fig:quantum_gates}
\end{figure}

\subsection{Polynomial representation of CCZ circuits}\label{sub:ccz_circuits}
Let $C$ be a circuit composed only of CCZ gates operating on $n$ qubits.
From the definition of the CCZ gate, we can deduce that the action of $C$ has the form:
\begin{equation}
    \lvert \bs x\rangle \mapsto (-1)^{f(\bs x)} \lvert \bs x \rangle
\end{equation}
where $f$ is a third-order multilinear homogeneous polynomial with binary coefficients:
\begin{equation}
    f(\bs x) = \sum_{\alpha < \beta < \gamma}^n c_{\alpha, \beta, \gamma} x_\alpha x_\beta x_\gamma \pmod{2}
\end{equation}
where $c_{\alpha, \beta, \gamma} \in \mathbb{Z}_2$.
In the worst case, a CCZ circuit implementing the transformation described by $f(\bs x)$ would require $\mathcal{O}(n^3)$ CCZ gates, as there are potentially $\mathcal{O}(n^3)$ non-zero terms in $f$.
However, the number of CCZ gates can be optimized by inserting CNOT gates in the circuit.
For example, let $f$ be the following third-order multilinear homogeneous polynomial acting on 4 qubits:
\begin{equation}
    f(\bs x) = x_1x_3x_4 \oplus x_2x_3x_4
\end{equation}
A naive implementation would use two CCZ gates, one for each term.
However, we can use the tranformation realized by the CNOT gate in order to perform a change of basis and reduce the number of cubic terms in the polynomial.
In this example, this change of basis corresponds to factoring the polynomial as follows:
\begin{equation}
    f(\bs x) = x_1x_3x_4 \oplus x_2x_3x_4 = (x_1 \oplus x_2) x_3x_4
\end{equation}
By introducing a CNOT gate, we can create a new basis where one qubit represents the parity $(x_1 \oplus x_2)$.
This allows us to use only one CCZ gate instead of two.
Here are the initial circuit and the optimized circuit using CNOT gates to reduce the number of CCZ gates:
\begin{equation}
    \includegraphics[valign=c]{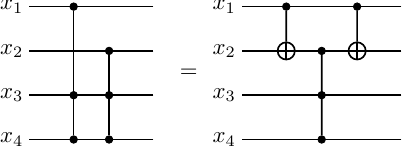}
\end{equation}
In Section~\ref{sec:gf_mult}, we will rely on this optimization technique using CNOT gates to construct an efficient circuit for multiplication over the field $GF(2^n)$.

\subsection{Binary field multiplication}\label{sub:gf_mult}

An element of $GF(2^n)$ can be represented by a polynomial $A(x)$ with binary coefficients and degree less than $n$:
\begin{equation}
    A(x) = \sum_{i=0}^{n-1} a_i x^i
\end{equation}
where $a_i \in GF(2)$ for all $i$.
Such polynomial $A(x)$ can be encoded into a binary vector $\bs a$ of size $n$ representing its coefficients:
\begin{equation}
    \bs a =
    \begin{bmatrix}
        a_0 \\
        \vdots \\
        a_{n-1} \\
    \end{bmatrix}.
\end{equation}
The multiplication of two polynomials $A(x)$ and $B(x)$ in $GF(2^n)$ is done modulo an irreducible primitive polynomial $P(x)$ of degree $n$:
\begin{equation}\label{eq:primitive_polynomial}
    P(x) = \sum_{i=0}^{n-1} p_i x^i + x^n
\end{equation}
where $p_i \in GF(2)$ for all $i$.
As proven in Reference~\cite{reyhani2004low}, multiplication over the field $GF(2^n)$ corresponds to the following operation:
\begin{equation}
    \ket{\bs a}\ket{\bs b}\ket{\bs 0} \mapsto \ket{\bs a}\ket{\bs b}\ket{\bs d \oplus Q \bs e}
\end{equation}
where all three registers are of size $n$, $Q$ is a binary matrix of size $n \times n-1$, and where $\bs d$ and $\bs e$ are vectors defined as
\begin{equation}
    \begin{aligned}
        \bs d &= L \bs b \\
        \bs e &= U \bs b,
    \end{aligned}
\end{equation}
with $L$ and $U$ defined as
\begin{equation}
    L =
    \begin{bmatrix}
        a_0 & 0 & \ldots & 0 & 0 \\
        a_1 & a_0 & \ldots & 0 & 0 \\
        \vdots & \vdots & \ddots & \vdots & \vdots \\
        a_{n-2} & a_{n-3} & \ldots & a_0 & 0 \\
        a_{n-1} & a_{n-2} & \ldots & a_1 & a_0 \\
    \end{bmatrix},\quad
    U =
    \begin{bmatrix}
        0 & a_{n-1} & a_{n-2} & \ldots & a_2 & a_1 \\
        0 & 0 & a_{n-1} & \ldots & a_3 & a_2 \\
        \vdots & \vdots & \vdots & \ddots & \vdots & \vdots \\
        0 & 0 & 0 & \ldots & a_{n-1} & a_{n-2} \\
        0 & 0 & 0 & \ldots & 0 & a_{n-1} \\
    \end{bmatrix}.
\end{equation}
The matrix $Q$, referred to as the reduction matrix, is derived from the primitive polynomial $P(x)$.
The key property of $Q$ is that it relates the higher-degree terms in the product to the lower-degree terms through the reduction modulo $P(x)$.
This reduction matrix satisfies the following equation:
\begin{equation}
    \bs x^\uparrow \equiv Q^T \bs x^\downarrow \pmod{P(x)}
\end{equation}
where $\bs x^\downarrow = [1, x, x^2, \ldots, x^{n-1}]^T$ and $\bs x^\uparrow = [x^n, x^{n+1}, \ldots, x^{2n-2}]^T$.
Using the fact that $P(x) \equiv 0 \pmod{P(x)}$, we can deduce from Equation~\ref{eq:primitive_polynomial} that
\begin{equation}
    x^n \equiv \sum_{i=0}^{n-1} p_i x^i \pmod{P(x)}.
\end{equation}
Hence, the first column of the reduction matrix $Q$ represents the coefficients $p_i$ of the primitive polynomial $P(x)$.
The subsequent columns of $Q$ can be easily computed by induction as $x^{n+i} \equiv x \cdot x^{n+i-1} \pmod{P(x)}$, for all integers $i > 0$.\\

\noindent\textbf{Example.} As an example, for the field $G(2^7)$ with primitive polynomial $P(x) = x^7 + x^5 + x^3 + x + 1$, the reduction matrix $Q$ is
\begin{equation}
    Q =
    \begin{bmatrix}
        1 & 0 & 1 & 0 & 0 & 0 \\
        1 & 1 & 1 & 1 & 0 & 0 \\
        0 & 1 & 1 & 1 & 1 & 0 \\
        1 & 0 & 0 & 1 & 1 & 1 \\
        0 & 1 & 0 & 0 & 1 & 1 \\
        1 & 0 & 0 & 0 & 0 & 1 \\
        0 & 1 & 0 & 0 & 0 & 0 \\
    \end{bmatrix}.
\end{equation}
The multiplication of the polynomials $A(x) = x^5 + x^3 + 1$ and $B(x) = x^2 + x$ with this primitive polynomial $P(x)$ is
\begin{equation}\label{eq:gf_mult_example}
    \begin{aligned}
        (x^5 + x^3 + 1)(x^2 + x) \pmod{P(x)}
        &= x^7 + x^5 + x^2 + x^6 + x^4 + x \pmod{P(x)} \\
        &= x^6 + x^4 + x^3 + x^2 + 1 \pmod{P(x)}
    \end{aligned}
\end{equation}
The binary vectors $\bs a, \bs b$ representing the polynomials $A(x)$ and $B(x)$ respectively, and the matrix $L$ and $U$ are:
\begin{equation}
    \bs a =
    \begin{bmatrix}
        1 \\
        0 \\
        0 \\
        1 \\
        0 \\
        1 \\
        0 \\
    \end{bmatrix},\quad
    \bs b =
    \begin{bmatrix}
        0 \\
        1 \\
        1 \\
        0 \\
        0 \\
        0 \\
        0 \\
    \end{bmatrix},\quad
    L =
    \begin{bmatrix}
        1 & 0 & 0 & 0 & 0 & 0 & 0 \\
        0 & 1 & 0 & 0 & 0 & 0 & 0 \\
        0 & 0 & 1 & 0 & 0 & 0 & 0 \\
        1 & 0 & 0 & 1 & 0 & 0 & 0 \\
        0 & 1 & 0 & 0 & 1 & 0 & 0 \\
        1 & 0 & 1 & 0 & 0 & 1 & 0 \\
        0 & 1 & 0 & 1 & 0 & 1 & 1 \\
    \end{bmatrix},\quad
    U =
    \begin{bmatrix}
        0 & 0 & 1 & 0 & 1 & 0 & 0 \\
        0 & 0 & 0 & 1 & 0 & 1 & 0 \\
        0 & 0 & 0 & 0 & 1 & 0 & 1 \\
        0 & 0 & 0 & 0 & 0 & 1 & 0 \\
        0 & 0 & 0 & 0 & 0 & 0 & 1 \\
        0 & 0 & 0 & 0 & 0 & 0 & 0 \\
    \end{bmatrix}.
\end{equation}
This leads to the following values for the vectors $\bs e$ and $\bs d$:
\begin{equation}
    \bs d = L \bs b = 
    \begin{bmatrix}
        0 & 1 & 1 & 0 & 1 & 1 & 1 \\
    \end{bmatrix}^T,\quad
    \bs e = U \bs b = 
    \begin{bmatrix}
        1 & 0 & 0 & 0 & 0 & 0 \\
    \end{bmatrix}^T.
\end{equation}
Finally, we have:
\begin{equation}
    \bs d \oplus Q \bs e = 
    \begin{bmatrix}
        1 & 0 & 1 & 1 & 1 & 0 & 1 \\
    \end{bmatrix}^T
\end{equation}
which represents the polynomial $x^6 + x^4 + x^3 + x^2 + 1$, matching the result obtained earlier in Equation~\ref{eq:gf_mult_example}.

\section{Synthesis algorithm for binary field multiplication}\label{sec:gf_mult}
In this section, we present methods for synthesizing binary field multiplier circuit using a subquadratic number of Toffoli gates.
First, in Subsection~\ref{sub:gf_subquadratic}, we introduce an algorithm for constructing a circuit that performs multiplication over $GF(2^n)$ using $\mathcal{O}(n^{\log_2(3)})$ Toffoli gates, without requiring any ancillary qubits.
Then, in Subsection~\ref{sub:gf_linear_depth}, we demonstrate how our approach can be adapted to construct a circuit for multiplication over $GF(2^n)$ with linear depth by using $\mathcal{O}(n\log_2(n))$ ancillary qubits, while maintaining the same subquadratic number of Toffoli gates.

\subsection{Subquadratic Toffoli gate count}\label{sub:gf_subquadratic}

Reference~\cite{cheung2008design} outlines a straightforward approach for implementing the following operation:
\begin{equation}\label{eq:gf_map}
    \ket{\bs a}\ket{\bs b}\ket{\bs 0} \mapsto \ket{\bs a}\ket{\bs b}\ket{\bs d \oplus Q \bs e}
\end{equation}
which corresponds to multiplication over $GF(2^n)$, as described in Section~\ref{sub:gf_mult}.
This method consists of three stages:
\begin{enumerate}
    \item First, the transformation $\ket{\bs a}\ket{\bs b}\ket{\bs 0} \mapsto \ket{\bs a}\ket{\bs b}\ket{\bs e}$ is achieved using Toffoli gates, where each gate has controls on the registers $\bs a$ and $\bs b$, and targets the third register.
    \item Then, the transformation $\ket{\bs a}\ket{\bs b}\ket{\bs e} \mapsto \ket{\bs a}\ket{\bs b}\ket{Q\bs e}$ is realized via a CNOT circuit on the third register.
    \item Similarly to the first step, the final transformation $\ket{\bs a}\ket{\bs b}\ket{\bs d \oplus Q\bs e} \mapsto \ket{\bs a}\ket{\bs b}\ket{\bs d \oplus M \bs e}$ is achieved using Toffoli gates, where each gate is has controls on the registers $\bs a$ and $\bs b$, and targets the third register.
\end{enumerate}
This method yields a circuit containing $\mathcal{O}(n^2)$ Toffoli gates for the first and last steps, and $\mathcal{O}(n^2)$ CNOT gates for the middle CNOT circuit.
This results in a circuit having a better Toffoli gate count than the naive approach using $\mathcal{O}(n^3)$ Toffoli gates.

To achieve a subquadratic Toffoli gate count, we will formalize the problem as a CCZ gate count optimization problem over the $\{\mathrm{CNOT}, CCZ\}$ gate set.
We can notice that the mapping of Equation~\ref{eq:gf_map} can be realized using only Toffoli gates, with targets on the third register and controls on the first two registers $\bs a$ and $\bs b$.
Then, by using the circuit equality of Equation~\ref{eq:ccz_toffoli}, we can conjugate all the Toffoli gates by a Hadamard gate on their target qubit.
Adjacent pairs of Hadamard can then be removed from the circuit by using the fact that the Hadamard gate is self-inverse.
This results in a circuit composed of a layer of Hadamard gates on the third register, followed by a layer of CCZ gates associated with $\bs d$ and $Q \bs e$, and then another layer of Hadamard gates on the third register.
As explained in Section~\ref{sub:ccz_circuits}, the CCZ subcircuit can then be represented by a third-order multilinear homogeneous polynomial with binary coefficients, noted $f$, satisfying
\begin{equation}\label{eq:gf_g_plus_h}
    f(\bs a, \bs b, \bs c, \bs c') = g(\bs a, \bs b, \bs c) \oplus h(\bs a, \bs b, \bs c')
\end{equation}
where $\bs a, \bs b, \bs c$ are the three registers and $\bs c' = Q^{-1} \bs c$, and where $g$ and $h$ are third-order multilinear homogeneous polynomials respectively associated with $\bs d$ and $Q\bs e$ and defined as follows:
\begin{align}
    g(\bs a, \bs b, \bs c) &= \sum_{i=0}^{n-1} \sum_{j=0}^{i} a_j b_{i - j} c_i \pmod{2} \label{eq:gf_g_def}, \\
    h(\bs a, \bs b, \bs c') &= \sum_{i=0}^{n-2} \sum_{j=i+1}^{n-1} a_j b_{n + i - j} c'_i \pmod{2}. \label{eq:gf_h_def}
\end{align}

Finding an implementation of the $GF(2^n)$ multiplier with a low number of CCZ (or Toffoli) gates then consists in performing the synthesis of the polynomial $f$ with an optimized number of CCZ gates.
To do so, we will rely on the following theorem, which defines a recursive formula for computing Equation~\ref{eq:gf_g_plus_h}.
This recursive formula have some similarities with the Karatsuba method for performing fast regular multiplications~\cite{karatsuba1962multiplication}.

\begin{theorem}\label{thm:gf_mult}
    Let $\bs a, \bs b, \bs c$ and $\bs c'$ be vectors of size $n$, where $n$ is an even integer.
    Then,
    \begin{equation}\label{eq:gf_thm}
        \begin{aligned}
            g(\bs a, \bs b, \bs c) \oplus h(\bs a, \bs b, \bs c') =\ &g(\bs a_L \oplus \bs a_R, \bs b_L \oplus \bs b_R, \bs c_R) \oplus h(\bs a_L \oplus \bs a_R, \bs b_L \oplus \bs b_R, \bs c'_L)\\
            &\oplus g(\bs a_R, \bs b_R, \bs c'_L \oplus \bs c_R) \oplus h(\bs a_R, \bs b_R, \bs c'_L \oplus \bs c'_R) \\
            &\oplus g(\bs a_L, \bs b_L, \bs c_L \oplus \bs c_R) \oplus h(\bs a_L, \bs b_L, \bs c'_L \oplus \bs c_R)
        \end{aligned}
    \end{equation}
    where the vectors $\bs a_L$ and $\bs a_R$ are respectively equal to the left and right halves of the vector $\bs a$.
\end{theorem}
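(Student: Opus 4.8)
The plan is to read the left-hand side as a linear pairing between the coefficients of the ordinary (unreduced) product $A(x)B(x)$ and the two output registers, and then to push the classical Karatsuba split of that product through this pairing. Let $A(x),B(x),C(x),C'(x)$ be the polynomials whose coefficient vectors are $\bs a,\bs b,\bs c,\bs c'$. Directly from Equations~\ref{eq:gf_g_def} and~\ref{eq:gf_h_def}, the inner sum $\sum_{j=0}^{i}a_jb_{i-j}$ is the coefficient of $x^i$ in $A(x)B(x)$ for $i\le n-1$, and $\sum_{j=i+1}^{n-1}a_jb_{n+i-j}$ is its coefficient of $x^{n+i}$. Writing $[AB]_k$ for the degree-$k$ coefficient, this gives $g(\bs a,\bs b,\bs c)=\bigoplus_{i=0}^{n-1}[AB]_i\,c_i$ and $h(\bs a,\bs b,\bs c')=\bigoplus_{i=0}^{n-2}[AB]_{n+i}\,c'_i$, so the whole left-hand side pairs the low $n$ coefficients of $AB$ with $\bs c$ and the high $n-1$ coefficients with $\bs c'$. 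The same reading applies at half size: for any degree-$<m$ polynomials, $g(\cdot,\cdot,\cdot)$ pairs the low $m$ coefficients of their product with its third argument and $h(\cdot,\cdot,\cdot)$ pairs the high $m-1$ coefficients, where $m=n/2$.

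Next I would apply the Karatsuba identity over $GF(2)$. With $A=A_L\oplus x^mA_R$ and $B=B_L\oplus x^mB_R$, and using $A_LB_R\oplus A_RB_L=(A_L\oplus A_R)(B_L\oplus B_R)\oplus A_LB_L\oplus A_RB_R$,
\[
A(x)B(x)=A_LB_L\,\oplus\,x^m\big[(A_L\oplus A_R)(B_L\oplus B_R)\oplus A_LB_L\oplus A_RB_R\big]\,\oplus\,x^{2m}A_RB_R.
\]
Because the pairing $\bigoplus_k[AB]_k z_k$ (with $z_k=c_k$ for $k<n$ and $z_k=c'_{k-n}$ otherwise) is linear in the coefficients of $AB$, I would distribute it over these three shifted pieces and, for each piece, split its coefficients into the low range (paired with $\bs c$) and the high range (paired with $\bs c'$) according to its shift. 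The unshifted piece $A_LB_L$ has degree $<n$ and yields $g(\bs a_L,\bs b_L,\bs c_L)\oplus h(\bs a_L,\bs b_L,\bs c_R)$; the piece $x^{2m}A_RB_R$ lies entirely above degree $n-1$ and yields $g(\bs a_R,\bs b_R,\bs c'_L)\oplus h(\bs a_R,\bs b_R,\bs c'_R)$; and the middle piece, shifted by $x^m$, straddles the boundary at degree $n$, so its low half pairs with $\bs c_R$ through $g$ and its high half with $\bs c'_L$ through $h$, each applied to all three products appearing in the bracket.

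To finish, I would reconcile this collection of ten atomic terms with the stated right-hand side. Expanding the XORed third arguments on the right using linearity of $g$ and $h$ in that slot — for example $g(\bs a_R,\bs b_R,\bs c'_L\oplus\bs c_R)=g(\bs a_R,\bs b_R,\bs c'_L)\oplus g(\bs a_R,\bs b_R,\bs c_R)$, and likewise for the other two mixed arguments — produces exactly the ten terms obtained from the coefficient routing, each occurring once. Since every term appears with multiplicity one on both sides, the two polynomials agree over $GF(2)$ and the identity follows.

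The step I expect to be the main obstacle is the index bookkeeping for the middle piece. One must check that shifting $(A_L\oplus A_R)(B_L\oplus B_R)\oplus A_LB_L\oplus A_RB_R$ by $x^m$ carries its low $m$ coefficients onto exactly the indices covered by $\bs c_R$ and its high $m-1$ coefficients onto exactly those covered by $\bs c'_L$, with the crossover at degree $n$ and the unused top index $c'_{n-1}$ landing precisely where the ranges stop, and that no coefficient spills past degree $2n-2$. Verifying these boundary alignments is the only delicate part; everything else reduces to the linearity of the pairing and of $g$ and $h$ in their third argument.
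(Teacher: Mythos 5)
Your proof is correct, and it takes a genuinely different route from the paper's. I checked the three coefficient-routing claims and the final term matching: the unshifted piece $A_LB_L$ (degrees $0$ through $n-2$) yields $g(\bs a_L,\bs b_L,\bs c_L)\oplus h(\bs a_L,\bs b_L,\bs c_R)$; the piece $x^{2m}A_RB_R$ (degrees $n$ through $2n-2$) yields $g(\bs a_R,\bs b_R,\bs c'_L)\oplus h(\bs a_R,\bs b_R,\bs c'_R)$; and the middle piece occupies degrees $m$ through $n+m-2$, so its low half lands exactly on $\bs c_R$ via $g$ and its high half on the first $m-1$ entries of $\bs c'_L$ via $h$, with $c'_{n-1}$ untouched on both sides (consistent with the fact that $h$ at size $n$ only reads $c'_0,\dots,c'_{n-2}$). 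Expanding the three mixed third arguments on the right-hand side of Equation~\ref{eq:gf_thm} by linearity produces the same ten atomic terms, each with multiplicity one, so the identity holds over $GF(2)$. The paper proves the theorem quite differently: it splits Equation~\ref{eq:gf_thm} into two sub-identities, one collecting all terms whose third argument involves $\bs c$ and one for $\bs c'$, and verifies each by direct re-indexing of the double sums from Equations~\ref{eq:gf_g_def} and~\ref{eq:gf_h_def}, shifting summation ranges until both sides coincide. Your argument trades that mechanical bookkeeping for a structural one — the left-hand side is a linear pairing against the coefficients of the unreduced product $A(x)B(x)$, and the identity is precisely the Karatsuba factorization pushed through that pairing — which explains where the six terms come from and would adapt readily to other splittings (unbalanced splits, Toom–Cook-style decompositions), whereas the paper's computation is self-contained and verifiable line by line but offers little insight into why the formula has this shape. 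The one delicate point you correctly flagged, the boundary alignment of the middle piece at degree $n$, does check out as described above.
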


The proof of Theorem~\ref{thm:gf_mult} is provided in Appendix~\ref{app:gf_mult}.
For the base case where $n = 1$, we can easily deduce from Equations~\ref{eq:gf_g_def} and~\ref{eq:gf_h_def} that
\begin{equation}\label{eq:gf_base_case}
    \begin{aligned}
        g(\bs a, \bs b, \bs c) \oplus h(\bs a, \bs b, \bs c') = a_0 b_0 c_0 \\
    \end{aligned}
\end{equation}
where $\bs a$, $\bs b$, $\bs c$ and $\bs c'$ are vectors of size $1$, which corresponds to a single CCZ gate on the qubits $a_0, b_0$ and $c_0$.
To use the formula given in Theorem~\ref{thm:gf_mult}, the size of the vectors, noted $n$, must be even. 
The following theorem demonstrates how we can still rely on the recursive formula of Equation~\ref{eq:gf_thm} when $n$ is odd by increasing the size of the vectors by $1$.

\begin{theorem}\label{thm:gf_mult_odd}
    Let $\bs a, \bs b, \bs c$ and $\bs c'$ be vectors of size $n$.
    And let $\tilde{\bs a}, \tilde{\bs b}, \tilde{\bs c}$ and $\tilde{\bs c}'$ be vectors of size $n+1$, defined as follows:
    \begin{equation}
    \begin{aligned}
        \tilde{\bs a} &= \begin{bmatrix} \bs a \\ 0 \end{bmatrix},\quad
        \tilde{\bs b} &= \begin{bmatrix} \bs b \\ 0 \end{bmatrix}, \quad
        \tilde{\bs c} &= \begin{bmatrix} \bs c \\ c'_0 \end{bmatrix}, \quad
        \tilde{\bs c}' &= \begin{bmatrix} c'_1 \\ \vdots \\ c'_{n-1} \\ 0 \\ 0 \end{bmatrix}.
    \end{aligned}
    \end{equation}
    Then,
    \begin{equation}
    \begin{aligned}
        g(\bs a, \bs b, \bs c) \oplus  h(\bs a, \bs b, \bs c') = g(\tilde{\bs a}, \tilde{\bs b}, \tilde{\bs c}) \oplus  h(\tilde{\bs a}, \tilde{\bs b}, \tilde{\bs c}'). \\
    \end{aligned}
    \end{equation}
\end{theorem}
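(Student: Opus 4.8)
The plan is to expand both $g(\tilde{\bs a}, \tilde{\bs b}, \tilde{\bs c})$ and $h(\tilde{\bs a}, \tilde{\bs b}, \tilde{\bs c}')$ directly from the definitions in Equations~\ref{eq:gf_g_def} and~\ref{eq:gf_h_def}, tracking how the zero-padding of $\bs a$ and $\bs b$ and the index-shift applied to $\bs c'$ act on each term, and then to verify that the two expansions differ from $g(\bs a, \bs b, \bs c)$ and $h(\bs a, \bs b, \bs c')$ only by a single common correction term that cancels under the XOR.

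First I would compute $g(\tilde{\bs a}, \tilde{\bs b}, \tilde{\bs c})$, whose outer sum now runs over $i = 0, \ldots, n$. I split off the new top index $i = n$. For $i \le n-1$ every factor $\tilde a_j, \tilde b_{i-j}, \tilde c_i$ coincides with the original one, since the padded entries $\tilde a_n = \tilde b_n = 0$ are never reached, so these terms reproduce $g(\bs a, \bs b, \bs c)$ exactly. For $i = n$ the coefficient is $\tilde c_n = c'_0$ and, using $\tilde a_n = \tilde b_n = 0$ to discard the boundary terms $j = 0$ and $j = n$, the inner sum collapses to $\sum_{j=1}^{n-1} a_j b_{n-j}$. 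Hence $g(\tilde{\bs a}, \tilde{\bs b}, \tilde{\bs c}) = g(\bs a, \bs b, \bs c) \oplus c'_0 \sum_{j=1}^{n-1} a_j b_{n-j}$.

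Next I would compute $h(\tilde{\bs a}, \tilde{\bs b}, \tilde{\bs c}')$, which over size $n+1$ reads $\sum_{i=0}^{n-1}\sum_{j=i+1}^{n} \tilde a_j \tilde b_{n+1+i-j}\,\tilde c'_i$. Because $\tilde c'_{n-1} = 0$ the top value of $i$ drops out, and the relation $\tilde c'_i = c'_{i+1}$ motivates the substitution $i' = i+1$. The padding again annihilates the boundary contributions: $j = n$ gives $\tilde a_n = 0$, and $j = i+1$ gives $\tilde b_n = 0$. After reindexing, the expression becomes $\sum_{i'=1}^{n-2} c'_{i'} \sum_{j=i'+1}^{n-1} a_j b_{n+i'-j}$, which is precisely $h(\bs a, \bs b, \bs c')$ with its $i = 0$ term removed, i.e.\ $h(\tilde{\bs a}, \tilde{\bs b}, \tilde{\bs c}') = h(\bs a, \bs b, \bs c') \oplus c'_0 \sum_{j=1}^{n-1} a_j b_{n-j}$.

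The conclusion then follows by XORing the two identities: the correction term $c'_0 \sum_{j=1}^{n-1} a_j b_{n-j}$ occurs in both and, being added to itself, vanishes, leaving $g(\tilde{\bs a}, \tilde{\bs b}, \tilde{\bs c}) \oplus h(\tilde{\bs a}, \tilde{\bs b}, \tilde{\bs c}') = g(\bs a, \bs b, \bs c) \oplus h(\bs a, \bs b, \bs c')$. Conceptually, the padding reclassifies the degree-$n$ product coefficient, originally the $i = 0$ term of $h$ weighted by $c'_0$, as a low-degree coefficient of the enlarged field, where it is instead supplied by the top term of $g$ through $\tilde c_n = c'_0$, while the shift of $\tilde{\bs c}'$ realigns the remaining high coefficients. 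I expect the main obstacle to be purely the index bookkeeping: correctly identifying which boundary terms the zero-padding annihilates and confirming that the leftover term of $g$ matches \emph{exactly} the missing term of $h$, so that the cancellation is genuine rather than merely structurally suggestive.
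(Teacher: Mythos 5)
Your proof is correct and takes essentially the same route as the paper: both expand $g(\tilde{\bs a}, \tilde{\bs b}, \tilde{\bs c})$ and $h(\tilde{\bs a}, \tilde{\bs b}, \tilde{\bs c}')$ directly from Equations~\ref{eq:gf_g_def} and~\ref{eq:gf_h_def}, use the zero padding to kill the boundary terms, and identify the $i = n$ term of $g$, namely $c'_0 \sum_{j=1}^{n-1} a_j b_{n-j}$, with the $i = 0$ term of $h$ that the shift of $\tilde{\bs c}'$ removes. The only difference is presentational: you split the computation into two identities and cancel the common correction term under XOR, whereas the paper carries out the same reindexing in a single chain of equalities.
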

The proof of Theorem~\ref{thm:gf_mult_odd} is provided in Appendix~\ref{app:gf_mult_odd}.\\

\noindent\textbf{Complexity analysis.}
Theorem~\ref{thm:gf_mult} divides the sum of $g$ and $h$ with vectors of size $n$ into $3$ smaller sums, each involving vectors of size $n/2$.
Thus, by applying the equality of Theorem~\ref{thm:gf_mult} recursively, the sum of $g$ and $h$ with vectors of size $n$ will be divided into $3^{\lceil\log_2(n)\rceil}$ sums of $g$ and $h$ with vectors of single elements, such as in Equation~\ref{eq:gf_base_case}.
Therefore, the constructed circuit performs multiplication over the field $GF(2^n)$ with no more than $3^{\lceil\log_2(n)\rceil}$ CCZ gates.
This correponds to $\mathcal{O}(n^{\log_2(3)}) \approx \mathcal{O}(n^{1.58})$ CCZ gates, without requiring any ancillary qubits.

\subsection{Linear depth}\label{sub:gf_linear_depth}

In this section, we demonstrate how Theorem~\ref{thm:gf_mult} can be utilized to construct a quantum circuit that performs multiplication over the field $GF(2^n)$ with linear depth by using $\mathcal{O}(n\log_2(n))$ ancillary qubits.

To parallelize the circuit, we begin by using $n$ ancillary qubits to store the register $\bs c' = Q^{-1} \bs c$:
\begin{equation}\label{eq:c_prime_map}
    \ket{\bs a}\ket{\bs b}\ket{\bs c}\ket{\bs 0} \mapsto \ket{\bs a}\ket{\bs b}\ket{\bs c}\ket{\bs c'}.
\end{equation}
Then, to apply the recursive formula of Theorem~\ref{thm:gf_mult}, we perform the following transformation:
\begin{equation}\label{eq:gf_map_par}
    \ket{\bs a}\ket{\bs b}\ket{\bs c}\ket{\bs c'}\ket{\bs 0} \mapsto \ket{\bs a_L \oplus \bs a_R}\ket{\bs a_R}\ket{\bs b_L \oplus \bs b_R}\ket{\bs b_R}\ket{\bs c_L}\ket{\bs c_R}\ket{\bs c'_L}\ket{\bs c'_L \oplus \bs c'_R}\ket{\bs c'_L \oplus \bs c_R}.
\end{equation}
This transformation can be done with a CNOT circuit of depth 2.
The first layer consists of adding $\bs a_R$ and $\bs b_R$ to $\bs a_L$ and $\bs b_L$ respectively, as well as adding $\bs c'_L$ to $\bs c'_R$ and $\bs c_R$ to the clean ancillae register.
The second layer of CNOT gates simply consists of adding $\bs c'_L$ to the last register storing $\bs c_R$.
This preparation enables the parallel execution of the first two recursive calls of the formula presented in Theorem~\ref{thm:gf_mult}:
\begin{equation}
    g(\bs a_L \oplus \bs a_R, \bs b_L \oplus \bs b_R, \bs c_R) \oplus h(\bs a_L \oplus \bs a_R, \bs b_L \oplus \bs b_R, \bs c'_L)
\end{equation}
and
\begin{equation}
    g(\bs a_R, \bs b_R, \bs c'_L \oplus \bs c_R) \oplus h(\bs a_R, \bs b_R, \bs c'_L \oplus \bs c'_R).
\end{equation}
These recursive calls can be executed simultaneously as they operate on independent registers.
After these recursive calls, we reverse the transformation of Equation~\ref{eq:gf_map_par} using the inverse of the CNOT circuit, which is also done in constant depth.
Finally, we realize the following transformation:
\begin{equation}
    \ket{\bs a}\ket{\bs b}\ket{\bs c}\ket{\bs c'}\ket{\bs 0} \mapsto \ket{\bs a_L}\ket{\bs a_R}\ket{\bs b_L}\ket{\bs b_R}\ket{\bs c_L \oplus \bs c_R}\ket{\bs c_R}\ket{\bs c'_L \oplus \bs c_R}\ket{\bs c'_R}
\end{equation}
This can also be accomplished with a CNOT circuit of depth 2 by first adding $\bs c_R$ to $\bs c_L$ and then adding $\bs c_R$ to $\bs c'_L$.
This transformation allows for the execution of the third recursive call from Theorem~\ref{thm:gf_mult}:
\begin{equation}
    g(\bs a_L, \bs b_L, \bs c_L \oplus \bs c_R) \oplus h(\bs a_L, \bs b_L, \bs c'_L \oplus \bs c_R).
\end{equation}
An illustrative example of the circuit generated by this procedure for $GF(2^4)$ is provided in Appendix~\ref{app:gf_example}.\\

\noindent\textbf{Complexity analysis.}
The transformation described in Equation~\ref{eq:c_prime_map} can be implemented by CNOT circuit with a depth of $\mathcal{O}(n)$~\cite{Kutin_2007}.
Then, the formula of Theorem~\ref{thm:gf_mult} is applied to divide the sum of $g$ and $h$ into three recursive calls.
Two of these recursive calls are done in parallel by using $\lceil k/2 \rceil$ ancillary qubits, where $k$ is the size of the registers on which $g$ and $h$ are acting.
Each recursive call results in either a constant-depth CNOT circuit or a single CCZ gate for the terminal case.
Moreover, each recursive call divides the size of the registers by 2.
Therefore, the depth of the circuit implementing all these recursive calls is equal to $\mathcal{O}(2^{\log_2(n)}) = \mathcal{O}(n)$.
Thus, the total depth of the circuit is $\mathcal{O}(n)$ and the number of ancillary qubits in the circuit is $\mathcal{O}(n\log_2(n))$.

\section{Logarithmic depth multiplication for particular families of primitive polynomials}\label{sec:gf_log_depth}
In this section, we show how multiplication over $GF(2^n)$ can be done with a logarithmic depth in the cases where the primitive polynomial is a trinomial or an equally spaced polynomial.
To do so, we will show that the CNOT circuit associated with the reduction matrix for these cases can be constructed with a logarithmic depth.
In particular, we will rely on the CNOT ladder operator, denoted by $\mathcal{L}_1^{(n)}$ and acting on a $(n+1)$-dimensional computational basis vector $\ket{\bs x}$ as follows:
\begin{equation}
    \mathcal{L}_1^{(n)} \ket{\bs x} = \ket{x_0} \bigotimes_{i=1}^{n} \ket{x_i \oplus x_{i-1}}.
\end{equation}
It has been demonstrated in Reference~\cite{remaud2025} that this CNOT ladder operator can be implemented with a depth of $\mathcal{O}(\log_2(n))$ using only CNOT gates.

\subsection{Trinomials}
A primitive trinomial of degree $n$ satisfies:
\begin{equation}
    P(x) = x^n + x^{k} + 1
\end{equation}
where $k$ is an integer satisfying $1 < k < n$.
Let $i$ be an integer such that $0 \leq i < (n-k)$, and let $j$ be an integer such that $0 \leq i+j(n-k) \leq n-2$.
The $(i+j(n-k))$-th column of the associated reduction matrix $Q$ represents the coefficients of the polynomial
\begin{equation}\label{eq:trinomials}
    x^{k+i+j(n-k)} + x^{i+j(n-k)} \pmod{P(x)}.
\end{equation}

For $j = 0$, Equation~\ref{eq:trinomials} simplifies to
\begin{equation}
    x^{k+i} + x^{i} \pmod{P(x)}.
\end{equation}
The CNOT circuit associated with these $(n-k)$ columns can be constructed by applying parallel CNOT gates with the qubit $i$ as control and the qubit $i+k$ as target.

For $j > 0$, we have the following recursive relation:
\begin{equation}
    \begin{aligned}
        &x^{k+i+j(n-k)} + x^{i+j(n-k)} \pmod{P(x)} \\
        &= x^{n} x^{i+(j-1)(n-k)} + x^{i+j(n-k)} \pmod{P(x)} \\
        &= (x^{k} + 1) x^{i+(j-1)(n-k)} + x^{i+j(n-k)} \pmod{P(x)} \\
        &= x^{k + i+(j-1)(n-k)} + x^{i+(j-1)(n-k)} + x^{i+j(n-k)} \pmod{P(x)}.
    \end{aligned}
\end{equation}
Therefore, the $(i+j(n-k))$-th column of $Q$ is equal to the $(i+(j-1)(n-k))$-th column of $Q$ plus an additional $1$ on the diagonal (corresponding to the term $x^{i+j(n-k)}$).
The CNOT circuit associated with these columns can then be constructed by applying $(n-k)$ ladders of CNOT gates in parallel.
As proven in Reference~\cite{remaud2025}, a CNOT ladder over $n$ qubits can be implemented with a depth of $\mathcal{O}(\log_2(n))$.
Thus, the CNOT circuit associated with the reduction matrix $Q$ for the primitive trinomial $P(x)$ can be constructed with a logarithmic depth.

For example, for the trinomial $P(x) = x^9 + x^7 + 1$, the associated reduction matrix $Q$ is
\begin{equation}
    Q =
    \begin{bmatrix}
        1 & 0 & 1 & 0 & 1 & 0 & 1 & 0 \\
        0 & 1 & 0 & 1 & 0 & 1 & 0 & 1 \\
        0 & 0 & 1 & 0 & 1 & 0 & 1 & 0 \\
        0 & 0 & 0 & 1 & 0 & 1 & 0 & 1 \\
        0 & 0 & 0 & 0 & 1 & 0 & 1 & 0 \\
        0 & 0 & 0 & 0 & 0 & 1 & 0 & 1 \\
        0 & 0 & 0 & 0 & 0 & 0 & 1 & 0 \\
        1 & 0 & 1 & 0 & 1 & 0 & 1 & 1 \\
        0 & 1 & 0 & 1 & 0 & 1 & 0 & 1 \\
    \end{bmatrix}.
\end{equation}
A CNOT circuit associated with this reduction matrix is:
\begin{center}
    \includegraphics{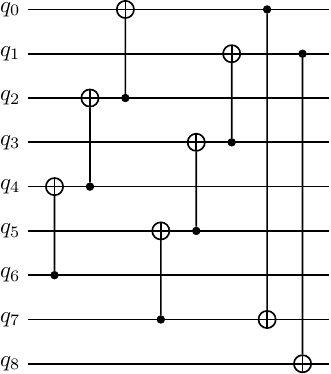}
\end{center}
where the two ladders of CNOT gates can be replaced by a logarithmic-depth CNOT circuit~\cite{remaud2025}.

\subsection{Equally spaced polynomials}
An equally spaced primitive polynomial $P(x)$ of degree $nk$ satisfies
\begin{equation}
    P(x) = \sum_{i=0}^n x^{ik}.
\end{equation}
where $k$ is an integer satisfying $0 < k < n$.
Let $j$ be an integer such that $0 \leq i < k$.
The $j$-th column of the associated reduction matrix $Q$ represents the coefficients of the polynomial
\begin{equation}
    \sum_{i=0}^{n-1} x^{ik+j} \pmod{P(x)}.
\end{equation}
The CNOT circuit associated with these $k$ columns can be constructed by applying $k$ parallel ladders of CNOT gates on the qubits $(j, k+j, 2k+j, \ldots, (n-1)k+j)$.

All the other columns $j + \ell k$, where $\ell$ is an integer satisfying $0 < \ell < n$, are representing the coefficients of the polynomial
\begin{equation}
    \begin{aligned}
        \sum_{i=0}^{n-1} x^{ik+j+\ell k} \pmod{P(x)} &= \sum_{i=0}^{n-1} x^{(i+\ell)k+j} \pmod{P(x)} \\
        &= x^{(n+\ell-1)k+j}+\sum_{i=1}^{n-1} x^{(i+\ell-1)k+j} \pmod{P(x)} \\
        &= x^{nk}x^{(\ell-1)k+j}+x^{(\ell-1)k+j}\sum_{i=1}^{n-1}x^{ik} \pmod{P(x)} \\
        &= x^{(\ell-1)k+j}\left(\sum_{i=0}^{n-1}x^{ik} + \sum_{i=1}^{n-1}x^{ik}\right) \pmod{P(x)} \\
        &= x^{(\ell-1)k+j} \pmod{P(x)}. \\
    \end{aligned}
\end{equation}
The CNOT circuit associated with these $n(k-1)-1$ columns can be constructed by applying $k$ ladder of CNOT gates on the qubits $(j, k+j, 2k+j, \ldots, (n-1)k+j)$.
As proven in Reference~\cite{remaud2025}, a CNOT ladder over $n$ qubits can be implemented with a depth of $\mathcal{O}(\log_2(n))$.
Thus, the CNOT circuit associated with the reduction matrix $Q$ for the equally spaced polynomial $P(x)$ can be constructed with a logarithmic depth.

For example, for $n=4$ and $k=2$, we have the equally spaced polynomial $P(x) = x^8 + x^6 + x^4 + x^2 + 1$.
Its associated reduction matrix $Q$ is
\begin{equation}
    Q =
    \begin{bmatrix}
        1 & 0 & 1 & 0 & 0 & 0 & 0 \\
        0 & 1 & 0 & 1 & 0 & 0 & 0 \\
        1 & 0 & 0 & 0 & 1 & 0 & 0 \\
        0 & 1 & 0 & 0 & 0 & 1 & 0 \\
        1 & 0 & 0 & 0 & 0 & 0 & 1 \\
        0 & 1 & 0 & 0 & 0 & 0 & 0 \\
        1 & 0 & 0 & 0 & 0 & 0 & 0 \\
        0 & 1 & 0 & 0 & 0 & 0 & 0 \\
    \end{bmatrix}.
\end{equation}
A CNOT circuit associated with this reduction matrix is:
\begin{center}
    \includegraphics{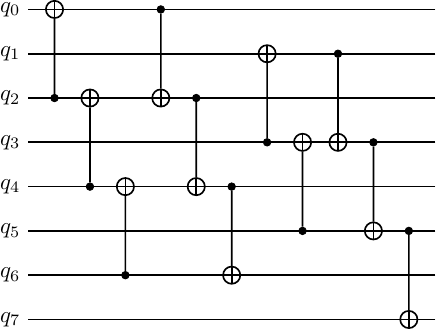}
\end{center}
where the $2k$ ladders of CNOT gates can be replaced by a logarithmic-depth CNOT circuit~\cite{remaud2025}.

\section{Conclusion}

We presented a novel approach for constructing efficient quantum circuits that perform binary field multiplication, a fundamental operation in quantum cryptanalysis of binary elliptic curve cryptography.
Our method achieves a Toffoli gate count of $\mathcal{O}(n^{\log_2(3)})$, matching the best known results, while offering a significant improvement in terms of space-time cost.
More specifically, the circuits generated by our algorithm have a linear depth and use $\mathcal{O}(n\log_2(n))$ ancillary qubits.
This corresponds to a space-time cost of $\mathcal{O}(n^2\log_2(n))$, which is the best-known space-time cost for quantum circuits implementing binary field multiplication with a subquadratic number of Toffoli gates.

An open question is whether this space-time cost can be further reduced.
We partially answered this question positively by demonstrating that there are indeed cases where further optimization is possible.
In particular, when the multiplication is performed modulo a primitive polynomial which is either a trinomial or an equally spaced polynomial, we showed that a logarithmic depth can be achieved by using $\mathcal{O}(n^{\log_2(3)})$ ancillary qubits.
An interesting direction for future work would be to investigate whether there exist other families of primitive polynomials for which multiplication can be performed in logarithmic depth.

\section*{Acknowledgments}
We acknowledge funding from the Plan France 2030 through the projects NISQ2LSQ ANR-22-PETQ-0006 and EPIQ ANR-22-PETQ-007.

\bibliographystyle{quantum}
\bibliography{ref.bib}

\newpage
\appendix

\section{Proof of Theorem~\ref{thm:gf_mult}}\label{app:gf_mult}

\begin{proof}[Proof of Theorem~\ref{thm:gf_mult}]
Equation~\ref{eq:gf_thm} holds if and only if the two following equations are true:
\begin{align}
        \!g(\bs a, \bs b, \bs c) &= g(\bs a_L \oplus \bs a_R, \bs b_L \oplus \bs b_R, \bs c_R) \oplus g(\bs a_R, \bs b_R, \bs c_R) \oplus g(\bs a_L, \bs b_L, \bs c_L \oplus \bs c_R) \oplus h(\bs a_L, \bs b_L, \bs c_R)  \label{eq:gf_2} \\
        \!h(\bs a, \bs b, \bs c') &= h(\bs a_L \oplus \bs a_R, \bs b_L \oplus \bs b_R, \bs c'_L) \oplus g(\bs a_R, \bs b_R, \bs c'_L) \oplus h(\bs a_R, \bs b_R, \bs c'_L \oplus \bs c'_R) \oplus h(\bs a_L, \bs b_L, \bs c'_L) \label{eq:gf_1}
\end{align}
We first prove Equation~\ref{eq:gf_2}:
\begin{equation}
    \begin{aligned}
        &g(\bs a_L \oplus \bs a_R, \bs b_L \oplus \bs b_R, \bs c_R) \oplus g(\bs a_R, \bs b_R, \bs c_R) \oplus g(\bs a_L, \bs b_L, \bs c_L \oplus \bs c_R) \oplus h(\bs a_L, \bs b_L, \bs c_R) \\
        &= \sum_{i=0}^{\frac{n}{2}-1} \sum_{j=0}^{i} (a_j + a_{\frac{n}{2} + j}) (b_{i - j} + b_{\frac{n}{2} + i - j}) c_{\frac{n}{2} + i} + \sum_{i=0}^{\frac{n}{2}-1} \sum_{j=0}^{i} a_{\frac{n}{2} + j} b_{\frac{n}{2} + i - j} c_{\frac{n}{2} + i} \\
        &\quad+ \sum_{i=0}^{\frac{n}{2}-1} \sum_{j=0}^{i} a_j b_{i - j} (c_i + c_{\frac{n}{2} + i}) + \sum_{i=0}^{\frac{n}{2}-2} \sum_{j=i+1}^{\frac{n}{2}-1} a_j b_{\frac{n}{2} + i - j} c_{\frac{n}{2} + i} \pmod{2} \\
        &= \sum_{i=0}^{\frac{n}{2}-1} \sum_{j=0}^{i} a_j b_{\frac{n}{2} + i - j} c_{\frac{n}{2} + i} + \sum_{i=0}^{\frac{n}{2}-1} \sum_{j=0}^{i} a_{\frac{n}{2} + j} b_{i - j} c_{\frac{n}{2} + i} + \sum_{i=0}^{\frac{n}{2}-1} \sum_{j=0}^{i} a_j b_{i - j} c_i \\
        &\quad+ \sum_{i=0}^{\frac{n}{2}-2} \sum_{j=i+1}^{\frac{n}{2}-1} a_j b_{\frac{n}{2} + i - j} c_{\frac{n}{2} + i} \pmod{2} \\
        &= \sum_{i=\frac{n}{2}}^{n-1} \sum_{j=0}^{i - \frac{n}{2}} a_j b_{i - j} c_{i} + \sum_{i=\frac{n}{2}}^{n-1} \sum_{j=\frac{n}{2}}^{i} a_{j} b_{i - j} c_{i} + \sum_{i=0}^{\frac{n}{2}-1} \sum_{j=0}^{i} a_j b_{i - j} c_i \\
        &\quad+ \sum_{i=\frac{n}{2}}^{n-2} \sum_{j=i-\frac{n}{2}+1}^{\frac{n}{2}-1} a_j b_{i - j} c_{i} \pmod{2} \\
          &= \sum_{i=\frac{n}{2}}^{n-1} \sum_{j=0}^{\frac{n}{2} - 1} a_j b_{i - j} c_{i} + \sum_{i=\frac{n}{2}}^{n-1} \sum_{j=\frac{n}{2}}^{i} a_{j} b_{i - j} c_{i} + \sum_{i=0}^{\frac{n}{2}-1} \sum_{j=0}^{i} a_j b_{i - j} c_i \pmod{2} \\
        &= \sum_{i=\frac{n}{2}}^{n-1} \sum_{j=0}^{i} a_j b_{i - j} c_{i} + \sum_{i=0}^{\frac{n}{2}-1} \sum_{j=0}^{i} a_j b_{i - j} c_i \pmod{2} \\
        &= \sum_{i=0}^{n-1} \sum_{j=0}^{i} a_j b_{i - j} c_i \pmod{2} \\
        &= g(\bs a, \bs b, \bs c).
    \end{aligned}
\end{equation}
We now prove Equation~\ref{eq:gf_1}:
\begin{equation}
    \begin{aligned}
        &h(\bs a_L \oplus \bs a_R, \bs b_L \oplus \bs b_R, \bs c'_L) \oplus g(\bs a_R, \bs b_R, \bs c'_L) \oplus h(\bs a_R, \bs b_R, \bs c'_L \oplus \bs c'_R) \oplus h(\bs a_L, \bs b_L, \bs c'_L) \\
        &= \sum_{i=0}^{\frac{n}{2}-2} \sum_{j=i+1}^{\frac{n}{2}-1} (a_j + a_{\frac{n}{2} + j}) (b_{\frac{n}{2} + i - j} + b_{n + i - j}) c'_i + \sum_{i=0}^{\frac{n}{2}-1} \sum_{j=0}^{i} a_{\frac{n}{2} + j} b_{\frac{n}{2} + i - j} c'_i \\
        &\quad+ \sum_{i=0}^{\frac{n}{2}-2} \sum_{j=i+1}^{\frac{n}{2}-1} a_{\frac{n}{2} + j} b_{n + i - j} (c'_i + c'_{\frac{n}{2} + i}) + \sum_{i=0}^{\frac{n}{2}-2} \sum_{j=i+1}^{\frac{n}{2}-1} a_j b_{\frac{n}{2} + i - j} c'_i \pmod{2} \\
        &= \sum_{i=0}^{\frac{n}{2}-2} \sum_{j=i+1}^{\frac{n}{2}-1} a_j b_{n + i - j} c'_i  + \sum_{i=0}^{\frac{n}{2}-2} \sum_{j=i+1}^{\frac{n}{2}-1} a_{\frac{n}{2} + j} b_{\frac{n}{2} + i - j} c'_i + \sum_{i=0}^{\frac{n}{2}-1} \sum_{j=0}^{i} a_{\frac{n}{2} + j} b_{\frac{n}{2} + i - j} c'_i \\
        &\quad+ \sum_{i=0}^{\frac{n}{2}-2} \sum_{j=i+1}^{\frac{n}{2}-1} a_{\frac{n}{2} + j} b_{n + i - j} c'_{\frac{n}{2} + i} \pmod{2} \\
        &= \sum_{i=0}^{\frac{n}{2}-2} \sum_{j=i+1}^{\frac{n}{2}-1} a_j b_{n + i - j} c'_i  + \sum_{i=0}^{\frac{n}{2}-2} \sum_{j=\frac{n}{2}+i+1}^{n-1} a_{j} b_{n + i - j} c'_i  + \sum_{i=0}^{\frac{n}{2}-1} \sum_{j=\frac{n}{2}}^{\frac{n}{2} + i} a_{j} b_{n + i - j} c'_i \\
        &\quad+ \sum_{i=\frac{n}{2}}^{n-2} \sum_{j=i+1}^{n-1} a_{j} b_{n + i - j} c'_{i} \pmod{2} \\
        &= \sum_{i=0}^{\frac{n}{2}-2} \sum_{j=i+1}^{\frac{n}{2}-1} a_j b_{n + i - j} c'_i  + \sum_{i=0}^{\frac{n}{2}-2} \sum_{j=\frac{n}{2}+i+1}^{n-1} a_{j} b_{n + i - j} c'_i + \sum_{i=0}^{\frac{n}{2}-2} \sum_{j=\frac{n}{2}}^{\frac{n}{2} + i} a_{j} b_{n + i - j} c'_i \\
        &\quad+  \sum_{i=\frac{n}{2}-1}^{\frac{n}{2}-1} \sum_{j=i+1}^{n-1} a_{j} b_{n + i - j} c'_i + \sum_{i=\frac{n}{2}}^{n-2} \sum_{j=i+1}^{n-1} a_{j} b_{n + i - j} c'_{i} \pmod{2} \\
        &= \sum_{i=0}^{\frac{n}{2}-2} \sum_{j=i+1}^{\frac{n}{2}-1} a_j b_{n + i - j} c'_i  + \sum_{i=0}^{\frac{n}{2}-2} \sum_{j=\frac{n}{2}+i+1}^{n-1} a_{j} b_{n + i - j} c'_i + \sum_{i=0}^{\frac{n}{2}-2} \sum_{j=\frac{n}{2}}^{\frac{n}{2} + i} a_{j} b_{n + i - j} c'_i \\
        &\quad+ \sum_{i=\frac{n}{2}-1}^{n-2} \sum_{j=i+1}^{n-1} a_{j} b_{n + i - j} c'_{i} \pmod{2} \\
        &= \sum_{i=0}^{\frac{n}{2}-2} \sum_{j=i+1}^{\frac{n}{2}+i} a_j b_{n + i - j} c'_i  + \sum_{i=0}^{\frac{n}{2}-2} \sum_{j=\frac{n}{2}+i+1}^{n-1} a_{j} b_{n + i - j} c'_i + \sum_{i=\frac{n}{2}-1}^{n-2} \sum_{j=i+1}^{n-1} a_{j} b_{n + i - j} c'_{i} \pmod{2} \\
        &= \sum_{i=0}^{\frac{n}{2}-2} \sum_{j=i+1}^{n-1} a_{j} b_{n + i - j} c'_i + \sum_{i=\frac{n}{2}-1}^{n-2} \sum_{j=i+1}^{n-1} a_{j} b_{n + i - j} c'_{i} \pmod{2} \\
        &= \sum_{i=0}^{n-2} \sum_{j=i+1}^{n-1} a_{j} b_{n + i - j} c'_{i} \pmod{2} \\
        & = h(\bs a, \bs b, \bs c').
    \end{aligned}
\end{equation}
\end{proof}

\section{Proof of Theorem~\ref{thm:gf_mult_odd}}\label{app:gf_mult_odd}
\begin{proof}[Proof of Theorem~\ref{thm:gf_mult_odd}]
    \begin{equation}
    \begin{aligned}
        &g(\tilde{\bs a}, \tilde{\bs b}, \tilde{\bs c}) \oplus h(\tilde{\bs a}, \tilde{\bs b}, \tilde{\bs c}') \\
        &= \sum_{i=0}^{n} \sum_{j=0}^{i} \tilde{a}_j \tilde{b}_{i - j} \tilde{c}_i + \sum_{i=0}^{n-1} \sum_{j=i+1}^{n} \tilde{a}_j \tilde{b}_{n + 1 + i - j} \tilde{c}'_i \pmod{2} \\
        &= \sum_{i=0}^{n-1} \sum_{j=0}^{i} a_j b_{i - j} c_i + \sum_{j=1}^{n-1} a_j b_{n - j} c'_0 + \sum_{i=0}^{n-3} \sum_{j=i+2}^{n-1} a_j b_{n + 1 + i - j} c'_{i+1} \pmod{2} \\
        &= g(\bs a, \bs b, \bs c) + \sum_{j=1}^{n-1} a_j b_{n - j} c'_0 + \sum_{i=1}^{n-2} \sum_{j=i+1}^{n-1} a_j b_{n + i - j} c'_i \pmod{2} \\
        &= g(\bs a, \bs b, \bs c) + \sum_{i=0}^{0} \sum_{j=i+1}^{n-1} a_j b_{n + i - j} c'_i + \sum_{i=1}^{n-2} \sum_{j=i+1}^{n-1} a_j b_{n + i - j} c'_i \pmod{2} \\
        &= g(\bs a, \bs b, \bs c) + \sum_{i=0}^{n-2} \sum_{j=i+1}^{n-1} a_j b_{n + i - j} c'_i \pmod{2} \\
        &= g(\bs a, \bs b, \bs c) \oplus h(\bs a, \bs b, \bs c')
    \end{aligned}
    \end{equation}
\end{proof}

\begin{figure}[t]
\section{Example}\label{app:gf_example}
\centering
\rotatebox{90}{%
  \begin{minipage}{\textheight}
    \centering
    \hspace*{0.3in}
    \includegraphics[width=0.95\columnwidth]{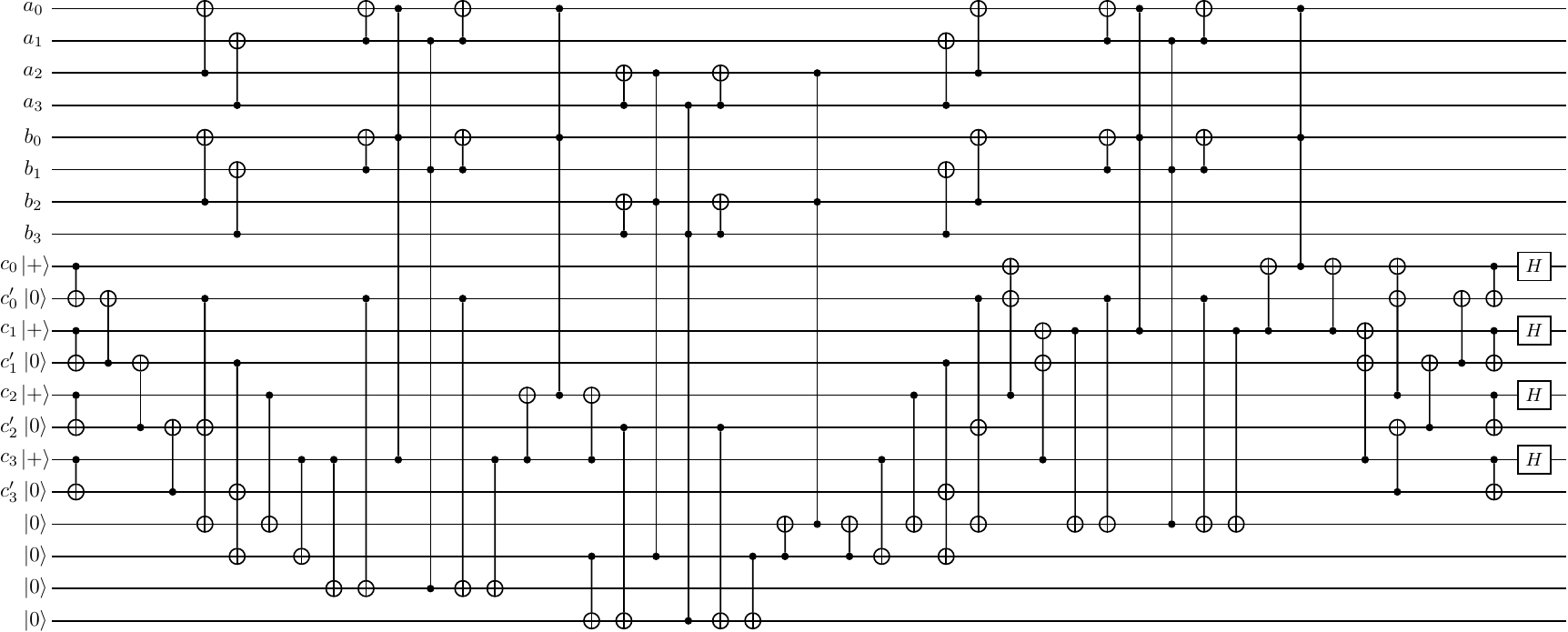}
    \captionof{figure}{Linear depth multiplication circuit for $GF(2^4)$ with primitive polynomial $P(x) = x^4 + x + 1$.}
    \label{fig:gf_mult_4}
  \end{minipage}%
}
\end{figure}
\end{document}